\documentclass[3p,authoryear,12pt]{elsarticle}

\usepackage{graphicx}
\usepackage{amsmath,amssymb,amsfonts,longtable,rotating,color,subfigure}

\newtheorem{theorem}{Theorem}[section]

\newtheorem{lemma}[theorem]{Lemma}

\newenvironment{proof}{\begin{trivlist} \item[] {\bf Proof:}}{
\hspace*{\fill}$\Box$ \end{trivlist}}
\usepackage{epstopdf}
\DeclareGraphicsRule{.tif}{png}{.png}{`convert #1 `dirname #1`/`basename #1 .tif`.png}
\usepackage{tikz}
\usetikzlibrary{decorations.pathreplacing}
\usepackage{url}
\usepackage{float} 
\usepackage{blkarray} 
\usepackage{multirow}

\makeatletter
\def\ps@pprintTitle{%
 \let\@oddhead\@empty
 \let\@evenhead\@empty
 \def\@oddfoot{\centerline{\thepage}}%
 \let\@evenfoot\@oddfoot}
\makeatother


\journal{Communications in Statistics - Theory and Methods }

\begin{document}

\begin{frontmatter}

\title{A  monotonicity property of weighted log-rank tests}

\author{Tahani Coolen-Maturi \corref{cor}}
\ead{tahani.maturi@durham.ac.uk}
\cortext[cor]{Corresponding author}
\author{Frank P.A.\ Coolen}
\ead{frank.coolen@durham.ac.uk}
\address{Department of Mathematical Sciences, Durham University, Durham, DH1 3LE, UK}

\begin{abstract}
The logrank test is a well-known nonparametric test which is often used to compare the survival 
distributions of two samples including right censored observations, it is also known as the 
Mantel-Haenszel test. The $G^{\rho}$ family of tests, introduced by \citet{HF82}, generalizes the
logrank test by using weights assigned to observations. In this paper, we present a monotonicity
property for the $G^{\rho}$ family of tests, which was motivated by the need to derive bounds for
the test statistic in case of imprecise data observations. 
\end{abstract}

\begin{keyword}
Logrank, Monotonicity, Imprecise probability, survival distribution, monotonicity property .
\end{keyword}
\end{frontmatter}

\section{Introduction}\label{intro}

The logrank test is a well-known nonparametric test which is often used to compare  the survival distributions of two  samples
containing right censored observations. It generalizes the Wilcoxon test, for data without right censored observations, and
is also known as Mantel-Haenszel test  \citep{MN66}. Several variations to this test have been introduced in the literature, 
e.g.\ Gehan's Generalized Wilcoxon test \citep{Gehan65,LL98}, Weighted Logrank tests \citep{LB77} and Peto-Peto  
test \citep{Peto72}. The Mantel-Haenszel test  \citep{MN66} gives equal weights to observations regardless of the time 
at which an event occurs. On the other hand, the Peto-Peto   test statistic assigns more weights to earlier event 
times \citep{Peto72}. \citet{HF82} introduced a class of tests, the $G^{\rho}$ family, which can be used to test the 
null hypothesis $H_0: S_0(t)=S_1(t)$ for all  $t>0$ against the alternative hypothesis $H_1: S_0(t)\neq S_1(t)$ for some $t>0$.

In this paper, we consider the $G^{\rho}$ family of tests for right censored data introduced by \citet{HF82}, in 
which the weight assigned to each observed failure time $t$ is of the form $[\hat{S}(t)]^{\rho}$ for fixed $\rho\geq 0$, 
where $\hat{S}(t)$ is the well-known Kaplan-Meier estimate of the survival function. Note that the use of `failure time' does
not restrict the test applications and could be interpreted as time of any event of interest, as long as each individual (or `item')
has only one event associated with it, which may either be observed (failure time) or only known to be greater than an observed
right censoring time. Throughout this paper it is assumed that right censoring is non-informative.
As special cases, $\rho=0$ gives the log-rank Mantel-Haenszel test \citep{MN66} and  $\rho=1$ gives the Peto-Prentice extension 
of the Wilcoxon statistic when $\rho=1$  \citep{Peto72,Prentice78}.
Several R packages are available  to perform these tests, e.g.\ \texttt{survdiff} within the \textbf{survival} package 
\citep{Rsurvival}  and the comprehensive \textbf{FHtest} package \citep{RFHtest}.

In this paper we prove a monotonicity property of  the $G^{\rho}$ class family  of tests for right censored data 
introduced by \citet{HF82}. Formally, a function $f$ is called monotonically non-decreasing if it preserves the order, 
that is if for all $a$ and $b$ with $a\leq b$ we have $f(a)\leq f(b)$. This research was motivated by possible applications
of such tests in case of imprecise data, where the ordering of observations per group is known but where the ranking
of observations between the groups may not be precisely determined due to lack of precise values for some or all of the
observations, it is most natural to assume that each observation is only known to belong to an interval. In such
cases, when intervals are overlapping, different combined rankings of the data 
from different groups may be possible and one typically would like to find the
minimum and maximum values of the test statistic corresponding to all possible
combined rankings. The result in this paper makes the derivation of these minimum
and maximum values straightforward. It should be noticed that such monotonicity 
trivially holds for the test statistic of the Wilcoxon rank-sum test, so the main challenge
here results from the presence of right-censored observations in the data set.

This paper is organised as follows. Section \ref{sec.notation} introduces the notation and the setting, while the main results are presented in Section \ref{sec.main}. An example is provided in Section \ref{sec.example}, and finally  the paper ends with concluding remarks in Section \ref{sec.con}.


\section{Notation and Setting}\label{sec.notation} 
 Let $\tau_1<\tau_2<\ldots<\tau_k$ denote $k$ times of observed failures.  Let $Y_i(\tau_j)$ be  the number of individuals in group $i$ who are at risk at $\tau_j$ $(i = 0,1)$, i.e.\  the  number of individuals from both groups at risk at $\tau_j$  is $Y (\tau_j) = Y_0(\tau_j) + Y_1(\tau_j) $, $j = 1,2,\dots,k$. Let $d_{ij}$ be the number of individuals in group $i$ who fail  at $\tau_j$ $(i = 0,1)$, so  the total number of failures at $\tau_j$ from both groups is $d_j = d_{0j} +d_{1j}$, $j = 1,2,\dots,k$. The information at time $\tau_j$ can be summarised in the following $2\times2$ table:

\begin{center}
\begin{tabular}{cccc}
&fail at $\tau_j$ &censored&at risk at $\tau_j$\\
\cline{2-3}
Group 0 & \multicolumn{1}{|c|}{$d_{0j}$}  & \multicolumn{1}{|c|}{$Y_0(\tau_j) - d_{0j}$} & $Y_0(\tau_j)$\\
\cline{2-3}
Group 1 & \multicolumn{1}{|c|}{$d_{1j}$}  & \multicolumn{1}{|c|}{$Y_1(\tau_j) - d_{1j}$} & $Y_1(\tau_j)$\\
\cline{2-3}
 & $d_{j}$  & $Y(\tau_j) - d_{j}$ & $Y(\tau_j)$\\
\end{tabular}
\end{center}

Consider the test statistic
\begin{equation}\label{eq.z}
Z=\frac{O-E}{\sqrt{V}}=\frac{\sum_j O_j- \sum_j E_j}{\sqrt{\sum_jV_j}}
\end{equation}
with 
\begin{align}
O_j &= \left[\hat{S}(\tau_j)\right]^{\rho} d_{1j}\\
E_j&=\left[\hat{S}(\tau_j)\right]^{\rho}\left(\frac{Y_1(\tau_j)}{Y(\tau_j)}\right) d_j \label{eq.exp1}\\ 
V_j&=\left[\hat{S}(\tau_j)\right]^{2\rho}\frac{Y_0(\tau_j)Y_1(\tau_j)}{(Y(\tau_j))^2}Y_1(\tau_j) \left(\frac{d_j}{Y(\tau_j)}\right)  \left(1-\frac{d_j}{Y(\tau_j)}\right)\label{eq.var1}
\end{align}
where $\rho\geq 0$  and $\hat{S}(\tau_j)$ is the  Kaplan-Meier  estimate at time $\tau_j$ \citep{KM58}. Then under  the null  hypothesis $H_0: S_0(t)=S_1(t)$ for all  $t>0$,  the test statistic $Z$ follows the standard normal distribution, i.e. $Z \sim N(0,1)$, so $Z^2 \sim \chi^2_1$.\\

For simplicity of notation, we assume throughout this paper that there are no ties, therefore $d_j=d_{0j}+d_{1j}=1$, and $O=\sum_j \left[\hat{S}(\tau_j)\right]^{\rho} d_{1j}$ is the number of failures from group $G_1$.  The expected value formula \eqref{eq.exp1} and the variance formula \eqref{eq.var1} can be simplified (as $d_j=1$) as
\begin{align}
E_j&=\left[\hat{S}(\tau_j)\right]^{\rho}\frac{Y_1(\tau_j)}{Y(\tau_j)}\label{eq.exp2}\\
V_j&=\left[\hat{S}(\tau_j)\right]^{2\rho}\frac{Y_0(\tau_j)Y_1(\tau_j)}{(Y(\tau_j))^2} \label{eq.var2}
\end{align}

 Now let $\tau_{k_j-1} \leq x_{j_i} < y_j \leq \tau_{k_j} $,  and let $u_0$ ($u_1$) be the number of censored observations from group $G_0$ ($G_1$) between $\tau_{k_j-1}$ and  $ \tau_{k_j} $, thus $u=u_0+u_1$. Define $\delta_i(\tau_{j})$ to be equal  to 1 if  $\tau_{j}$  is a failure from group $G_i$, and zero otherwise, $i=0,1$. Let $Y_i(\tau_{k_j})$ be the number of individuals in group $G_i$ who are at risk at $\tau_{k_j}$, $i=0,1$, and  let  $Y(\tau_{k_j})=Y_0(\tau_{k_j})+Y_1(\tau_{k_j})$ be the number of individuals from both groups at risk at $\tau_{k_j}$, $k_j\in \{1,2,\ldots,k\}$.
 This is illustrated in the first row of  Figure \ref{plot0}. The next section introduces the main results of this paper.\\[.5ex]

\section{Main Results}\label{sec.main}

In this section, we consider the following setting. For  a particular data set, with fixed failure-censored status, suppose that  all observations from group $G_0$, $x_1<x_2<\ldots <x_{n_0}$,  precede  all observations from group $G_1$, $y_1<y_2<\ldots <y_{n_1}$. We would like to swap  between neighbouring data observations, one pair of a $G_0$ observation and a $G_1$ observation at a time, where the
latter is the smallest $G_1$ observation greater than the $G_0$ observation, until we have all observations from group $G_1$  preceding  all observations from group $G_0$. In total we can do that in $n_0 n_1$ steps (switches), where $n_0$  and $n_1$  are the sample sizes of  group $G_0$ and group $G_1$, respectively. For example, if we have 3 observations from each group, the number of switches from $x_1<x_2<x_3<y_1<y_2<y_3$ to $y_1<y_2<y_3<x_1<x_2<x_3$ is 9. The property presented in this paper is that, under the null hypothesis,  the $z$-test statistic behaves monotonically throughout this swapping process.  To explain this clearly, we need to introduce further notation. Let  $Z_B$  be the $z$-test statistic value corresponding to the permutation {\it before} swapping the adjacent values  $ x_{j_i} < y_j$, and $Z_A$  is the value of the $z$-statistic  {\it after} the swap. That is we want to show that  $Z_B\leq Z_{A}$ for the  $n_0 n_1$ switches. This is stated in the following theorem.


\begin{theorem}
Under the null  hypothesis,  the two-sample  logrank test statistic $Z=\frac{O-E}{\sqrt{V}}$, given in \eqref{eq.z}, is  monotonic.
\end{theorem}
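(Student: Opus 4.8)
The plan is to reduce the theorem to a single elementary switch and then to verify $Z_B\le Z_A$ for that switch by a case analysis driven by the failure/censoring status of the swapped pair. Since the whole $n_0n_1$-step process is a composition of elementary switches, by transitivity it suffices to treat one switch that exchanges an adjacent pair $x_{j_i}\in G_0$, $y_j\in G_1$ (with $x_{j_i}$ immediately preceding $y_j$) into the order $y_j,x_{j_i}$. The first thing I would record is that the effect of this switch is almost entirely local: every failure time lying strictly before $x_{j_i}$ has the same pooled and group-wise risk sets and the same value of $\hat S$ before and after the switch, so its contributions to $O$, $E$ and $V$ are unchanged. Writing $N=O-E$ and $D=V>0$, and letting $m$ be the pooled number at risk at $x_{j_i}$, the inequality to be proved is equivalent, after clearing denominators, to the cross-multiplied form $N_A\sqrt{D_B}\ge N_B\sqrt{D_A}$.

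I would then split into four cases according to whether each of $x_{j_i}$, $y_j$ is a failure or is censored. If both are censored, no failure time, risk set or survival estimate is altered at all, so $Z_B=Z_A$ and there is nothing to prove. If both are failures, the two failure times merely trade their group labels: the pooled risk sets $m,m-1$ at the two positions are unchanged, so the product of the two Kaplan--Meier factors, and hence the entire downstream curve $\hat S$, is identical before and after. Only the two local terms move, and a direct computation gives a clean nonnegative increment of the numerator, of the form
\[
N_A-N_B=\frac{[\hat S]^{\rho}}{m^{\rho}(m-1)}\big[(m-1)^{\rho+1}-(m-2)^{\rho+1}\big]\ge 0,
\]
together with a signed change of $D$. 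To convert numerator and denominator behaviour into the ratio inequality I would split on the sign of $N$: when $N_B<0<N_A$ the inequality $N_A\sqrt{D_B}\ge N_B\sqrt{D_A}$ holds automatically, so only the configurations in which $N_B$ and $N_A$ share a sign require work, and there the task reduces to showing that $D$ moves in the favourable direction (nondecreasing while $N\le 0$, nonincreasing while $N\ge 0$).

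The genuinely hard cases are the two mixed ones, where exactly one of $x_{j_i}$, $y_j$ is a failure. Here the single failure is displaced past a censored observation of the other group, so it now occurs in a risk set of size $m-1$ rather than $m$ (or vice versa); consequently the Kaplan--Meier factor it contributes changes, and the \emph{entire} downstream survival curve---and therefore every later weight $[\hat S(\tau_\ell)]^{\rho}$---is rescaled by the common factor $\big(\tfrac{m(m-2)}{(m-1)^2}\big)^{\rho}$ (or its reciprocal). I would exploit the fact that $Z$ is invariant under multiplying all weights by a single positive constant to absorb as much of this rescaling as possible, and then combine the local change at the displaced failure with the rescaled tail contributions.

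The main obstacle is exactly this non-locality in the mixed cases: because the downstream weights all shift, the increments $N_A-N_B$ and $D_A-D_B$ are no longer local quantities but depend, through weighted tail sums of $d_{1\ell}-Y_1(\tau_\ell)/Y(\tau_\ell)$, on the whole configuration to the right of the swapped pair. A purely local computation therefore cannot settle the sign, and one is forced to control the numerator and the denominator \emph{jointly}, establishing $N_A\sqrt{D_B}\ge N_B\sqrt{D_A}$ directly and separately across the sign regimes identified above. I expect the bulk of the work, and the only real difficulty, to lie in bounding these coupled tail contributions.
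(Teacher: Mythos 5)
Your skeleton coincides with the paper's: reduce to a single adjacent switch by transitivity over the $n_0n_1$ switches, split into the four cases determined by the failure/censoring status of the swapped pair, dispose of the censored--censored case trivially, and treat the remaining cases by comparing numerators and variances with a sign analysis. But as a proof the proposal is incomplete precisely where you yourself say the difficulty lies. In the two mixed cases you stop after naming the obstacle (the Kaplan--Meier factor at the displaced failure changes from $1-\tfrac1m$ to $1-\tfrac1{m-1}$, so every later weight is rescaled by $\bigl(\tfrac{m(m-2)}{(m-1)^2}\bigr)^{\rho}$) and sketching a strategy (scale invariance plus bounds on coupled tail sums); no argument is actually carried out. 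These are exactly the cases on which the paper spends almost all of its effort: it writes $E_B,E_A,V_B,V_A$ as differing in a single local term, with the weight $[\hat S(\tau_j)]^{\rho}$, the observed sum $O$ and all off-switch terms held fixed, deduces $E_B\geq E_A$, and then runs a case analysis over the sign of $O-E$ and the direction of the variance change, closing the awkward combinations with the interval comparison in its appendix. (In fairness, the non-locality you identify is genuine: the paper's bookkeeping silently assumes the weights are unchanged by the switch, which is exact only for $\rho=0$; you spotted a real subtlety that the paper glosses over, but having spotted it you do not overcome it.)

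There is also a step in your plan that demonstrably fails. In the both-failures case you claim that once $N_B$ and $N_A$ share a sign, ``the task reduces to showing that $D$ moves in the favourable direction'' (nonincreasing when $N\geq 0$). That reduction is impossible, because the variance can move the wrong way while the inequality still holds: at such a switch $V_A-V_B$ has the sign of $Y_1-Y_0$ at the switch point, and nothing ties this sign to the sign of $N$. Concretely, take $n_0=1$, $n_1=10$, all observations failures, $\rho=0$, and the ninth switch, from $y_1\cdots y_8\,x_1\,y_9y_{10}$ to $y_1\cdots y_9\,x_1\,y_{10}$: there $Y_0=1<Y_1=2$ at the switch point, and one computes $O-E_B\approx 0.52>0$ with $V_B\approx 1.21$, $V_A\approx 1.46$, so the numerator is positive and the variance \emph{increases}; nevertheless $Z_B\approx 0.47 \leq Z_A\approx 0.84$, because the numerator gain ($\tfrac1{m-1}=\tfrac12$) outweighs the variance increase. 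So any correct argument must compare the \emph{magnitudes} of the changes in $N$ and $V$ in these unfavourable sign/variance combinations---this is what the paper's sub-case analysis together with its appendix comparison of $(O-E_B)/\sqrt{V_B}$ and $(O-E_A)/\sqrt{V_A}$ is attempting---and your proposal has nothing in its place, either in the both-failures case or in the mixed cases it defers.
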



The remainder of this section consists of the proof of this theorem. To start the proof, suppose  we swap $x_{j_i} $ and $y_j$, that is now $x_{j_i} > y_j$, then we have four different scenarios we need to consider:\\[.5ex]

\begin{figure}[ht]
\begin{center}
\resizebox{!}{6cm}{%
\begin{tikzpicture}
\draw[step=1cm,very thin] (5,0) grid (7,2);
\node at (6,2.3) {$\tau_{k_j}$};
\node at (10.1,1.5) {$Y_0(\tau_{k_j})=Y_0(\tau_{k_j-1})-u_0-\delta_0(\tau_{k_j-1})$};
\node at (10.1,.5) {$Y_1(\tau_{k_j})=Y_1(\tau_{k_j-1})-u_1-\delta_1(\tau_{k_j-1})$};
\node at (9.4,-.5) {$Y(\tau_{k_j})=Y(\tau_{k_j-1})-u-1$};


\draw[step=1cm,very thin] (0,0) grid (2,2);
\node at (1,2.3) {$\tau_{k_j-1}$};
\node at (2.9,1.5) {$Y_0(\tau_{k_j-1})$};
\node at (2.9, .5) {$Y_1(\tau_{k_j-1})$};
\node at (2.9,-.5) {$Y(\tau_{k_j-1})$};

\draw[step=1cm,very thin] (0,4) grid (2,6);
\node at (1,6.3) {$\tau_{k_j-1}$};
\node at (2.9,5.5) {$Y_0(\tau_{k_j-1})$};
\node at (2.9, 4.5) {$Y_1(\tau_{k_j-1})$};
\node at (2.9,3.5) {$Y(\tau_{k_j-1})$};

\draw[step=1cm,very thin] (5,4) grid (7,6);
\node at (6,6.3) {$\tau_{k_j}$};
\node at (10.1,5.5) {$Y_0(\tau_{k_j})=Y_0(\tau_{k_j-1})-u_0-\delta_0(\tau_{k_j-1})$};
\node at (10.1,4.5) {$Y_1(\tau_{k_j})=Y_1(\tau_{k_j-1})-u_1-\delta_1(\tau_{k_j-1})$};
\node at (9.4,3.5) {$Y(\tau_{k_j})=Y(\tau_{k_j-1})-u-1$};

\node at (-2,1) {After};
\node at (-.5,1.5) {$G_0$};
\node at (-.5, .5) {$G_1$};

\node at (-2,5) {Before};
\node at (-.5,5.5) {$G_0$};
\node at (-.5, 4.5) {$G_1$};

\end{tikzpicture}
}
\caption{Setting and Scenario S1}
\label{plot0}
\end{center}
\end{figure}
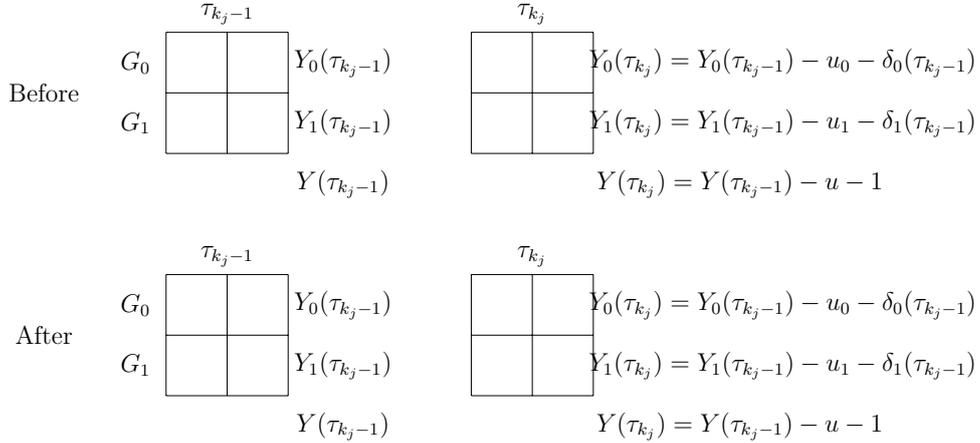

\noindent  {\bf Scenario 1 (S1):  when both $x_{j_i}$ and $y_j$ are censoring times}\\
In this case, nothing will change to the  $2\times 2$ tables in Figure \ref{plot0}, where the first row is corresponding to  {\it before} the swap and the second row to {\it after} the swap. As the value of  $\hat{S}(\tau_j)$ is a step function that change value only at the time of observed failure, therefore  the expected value and the variance formula are the same before and after the swap. That is if we swap any two censored observations  between $\tau_{k_j-1}$ and  $ \tau_{k_j} $ this will not affect the expected value and the variance, as it does not affect the margins in the $2\times 2$ tables in Figure \ref{plot0}. Thus 
$Z_B= \frac{(O-E_B)}{\sqrt{V_B}}$ and $Z_A= \frac{(O-E_A)}{\sqrt{V_A}}$ are equal, $Z_A=Z_B$, where we use $B$ as subscript for the case before the swap and $A$ for the case after the swap.\\

The proofs for the next three cases are very similar, yet for the sake of completeness full details are given. \\

\noindent  {\bf Scenario 2 (S2):  when $x_{j_i}$ is a failure time and  $y_j$ is a censoring time}\\
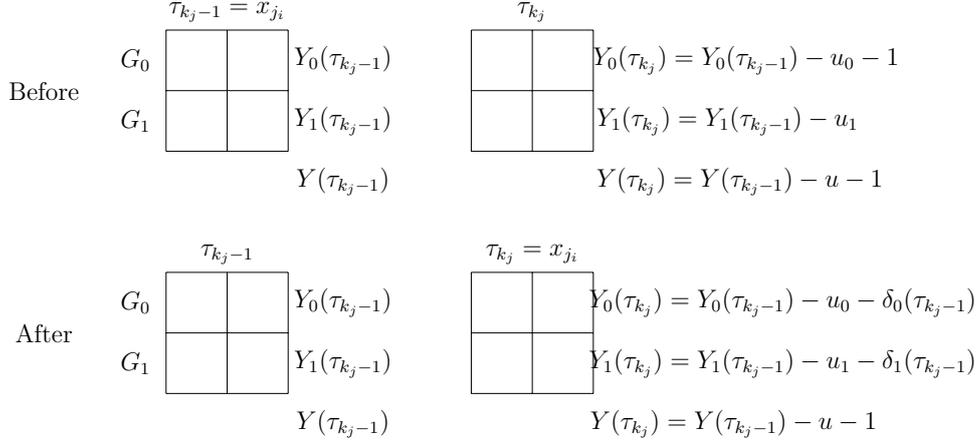
\begin{figure}[ht]
\begin{center}
\resizebox{!}{6cm}{%
\begin{tikzpicture}
\draw[step=1cm,very thin] (5,0) grid (7,2);
\node at (6,2.3) {$\tau_{k_j}=x_{j_i}$};
\node at (10.1,1.5) {$Y_0(\tau_{k_j})=Y_0(\tau_{k_j-1})-u_0-\delta_0(\tau_{k_j-1})$};
\node at (10.1,.5) {$Y_1(\tau_{k_j})=Y_1(\tau_{k_j-1})-u_1-\delta_1(\tau_{k_j-1})$};
\node at (9.3,-.5) {$Y(\tau_{k_j})=Y(\tau_{k_j-1})-u-1$};


\draw[step=1cm,very thin] (0,0) grid (2,2);
\node at (1,2.3) {$\tau_{k_j-1}$};
\node at (2.9,1.5) {$Y_0(\tau_{k_j-1})$};
\node at (2.9, .5) {$Y_1(\tau_{k_j-1})$};
\node at (2.9,-.5) {$Y(\tau_{k_j-1})$};

\draw[step=1cm,very thin] (0,4) grid (2,6);
\node at (1,6.3) {$\tau_{k_j-1}=x_{j_i}$};
\node at (2.9,5.5) {$Y_0(\tau_{k_j-1})$};
\node at (2.9, 4.5) {$Y_1(\tau_{k_j-1})$};
\node at (2.9,3.5) {$Y(\tau_{k_j-1})$};

\draw[step=1cm,very thin] (5,4) grid (7,6);
\node at (6,6.3) {$\tau_{k_j}$};
\node at (9.5,5.5) {$Y_0(\tau_{k_j})=Y_0(\tau_{k_j-1})-u_0-1$};
\node at (9.2,4.5) {$Y_1(\tau_{k_j})=Y_1(\tau_{k_j-1})-u_1$};
\node at (9.4,3.5) {$Y(\tau_{k_j})=Y(\tau_{k_j-1})-u-1$};

\node at (-2,1) {After};
\node at (-.5,1.5) {$G_0$};
\node at (-.5, .5) {$G_1$};

\node at (-2,5) {Before};
\node at (-.5,5.5) {$G_0$};
\node at (-.5, 4.5) {$G_1$};

\end{tikzpicture}
}
\caption{Scenario S2}
\label{plot1}
\end{center}
\end{figure} 

This second scenario is illustrated in Figure \ref{plot1}, and the  expect values for before and after the swap are given as

\[E_B= \Sigma_{j\neq k_j} E_j+ \left[\hat{S}(\tau_j)\right]^{\rho} \frac{Y_1(\tau_{k_j-1})-u_1}{Y(\tau_{k_j-1})-u-1}\] 

\[E_A= \Sigma_{j\neq k_j} E_j+\left[\hat{S}(\tau_j)\right]^{\rho} \frac{Y_1(\tau_{k_j-1})-u_1-\delta_1(\tau_{k_j-1})}{Y(\tau_{k_j-1})-u-1} \] 

Clearly $E_B \geq E_A$ thus $(O-E_B) \leq (O-E_A)$. And  the variances 

\[V_B= \Sigma_{j\neq k_j} V_j+\left[\hat{S}(\tau_j)\right]^{2\rho} \frac{(Y_0(\tau_{k_j-1})-u_0-1)(Y_1(\tau_{k_j-1})-u_1)}{(Y(\tau_{k_j-1})-u-1)^2}\] 

\[V_A= \Sigma_{j\neq k_j} V_j+\left[\hat{S}(\tau_j)\right]^{2\rho} \frac{(Y_0(\tau_{k_j-1})-u_0-\delta_0(\tau_{k_j-1}))(Y_1(\tau_{k_j-1})-u_1-\delta_1(\tau_{k_j-1}))}{(Y(\tau_{k_j-1})-u-1)^2} \] 

\vspace{.5cm}

We have two main cases:\\
\begin{itemize}
\item[(i)] If $\delta_0(\tau_{k_j-1})=1$, that is $\tau_{k_j-1}$  is a failure from group $G_0$ and thus $\delta_1(\tau_{k_j-1})=0$, then $V_B=V_A$. Thus $Z_B \leq Z_A$, as from above $(O-E_B) \leq (O-E_A)$.\\

\item[(ii)] If  $\delta_1(\tau_{k_j-1})=1$, that is $\tau_{k_j-1}$  is a failure from group $G_1$ and thus $\delta_0(\tau_{k_j-1})=0$, then

\[V_A= \Sigma_{j\neq k_j} V_j+\left[\hat{S}(\tau_j)\right]^{2\rho} \frac{(Y_0(\tau_{k_j-1})-u_0)(Y_1(\tau_{k_j-1})-u_1-1)}{(Y(\tau_{k_j-1})-u-1)^2} \] 
 and thus we have two sub-cases:\\
 
 \begin{itemize}
 \item[(iia)] If $Y_1(\tau_{k_j-1})-u_1>Y_0(\tau_{k_j-1})-u_0$ then $V_B<V_A$,  i.e. $\frac{1}{V_B}>\frac{1}{V_A}$.
  \begin{itemize}
\item[-] If $(O-E_B)>0$, then $(O-E_A)$ also has to be positive.\\
We multiply $(O-E_B) \leq (O-E_A)$ by $\frac{1}{\sqrt{V_B}}$ and  by $\frac{1}{\sqrt{V_A}}$, then we have
\begin{align*}
\frac{(O-E_B)}{\sqrt{V_A}} &\leq \frac{(O-E_A)}{\sqrt{V_A}}   \\
 \frac{(O-E_B)}{\sqrt{V_B}} &\leq  \frac{(O-E_A)}{\sqrt{V_B}} 
\end{align*}
Now we multiply $\frac{1}{\sqrt{V_A}}<\frac{1}{\sqrt{V_B}}$ by $(O-E_A)$ and  by $(O-E_B)$, then we have
\begin{align*}
 \frac{(O-E_A)}{\sqrt{V_A}} &< \frac{(O-E_A)}{\sqrt{V_B}}  \\
\frac{(O-E_B)}{\sqrt{V_A}} &< \frac{(O-E_B)}{\sqrt{V_B}}
\end{align*}
thus we  obtain the following inequalities
\begin{align*}
\frac{(O-E_B)}{\sqrt{V_A}} &\leq \frac{(O-E_A)}{\sqrt{V_A}} < \frac{(O-E_A)}{\sqrt{V_B}}  \\
\frac{(O-E_B)}{\sqrt{V_A}} &< \frac{(O-E_B)}{\sqrt{V_B}} \leq  \frac{(O-E_A)}{\sqrt{V_B}} 
\end{align*}
Then the proof follows the same argument given in the appendix, and indeed $Z_B \leq Z_A$. \\

\item[-] If   $(O-E_B)<0$, then
\begin{align*}
\frac{1}{\sqrt{V_B}}&>\frac{1}{\sqrt{V_A}}\\
\frac{(O-E_B)}{\sqrt{V_B}}&<\frac{(O-E_B)}{\sqrt{V_A}}\leq\frac{(O-E_A)}{\sqrt{V_A}}
\end{align*}
thus $Z_B \leq Z_A$. 

 \end{itemize}

\item[(iib)]  If $Y_1(\tau_{k_j-1})-u_1<Y_0(\tau_{k_j-1})-u_0$ then $V_B>V_A$, i.e. $\frac{1}{V_B}<\frac{1}{V_A}$.\\
 \begin{itemize}
\item[-] If $(O-E_B)>0$ then $(O-E_A)$ also has to be positive.\\
 In this case, we  multiply $(O-E_B) \leq (O-E_A)$ by $\frac{1}{\sqrt{V_B}}<\frac{1}{\sqrt{V_A}}$, to obtain that $Z_B \leq Z_A$.\\
 
 \item[-] If $(O-E_A)>0$, then 
 $\frac{(O-E_A)}{\sqrt{V_B}}<\frac{(O-E_A)}{\sqrt{V_A}}$,\\
 and if we divide  $(O-E_B) \leq (O-E_A)$ by $\sqrt{V_B}$ then we have
 $\frac{(O-E_B)}{\sqrt{V_B}} \leq \frac{(O-E_A)}{\sqrt{V_B}} $ and therefore  we have  $Z_B \leq Z_A$.
Note this includes the case when $(O-E_B)$ is negative but $(O-E_A)$ is  positive.\\

\item[-] If both  $(O-E_B)$ and $(O-E_A)$ are negative\\

We multiply $(O-E_B) \leq (O-E_A)$ by $\frac{1}{\sqrt{V_B}}$ and  by $\frac{1}{\sqrt{V_A}}$ we have
\begin{align*}
\frac{(O-E_B)}{\sqrt{V_A}} &\leq \frac{(O-E_A)}{\sqrt{V_A}}   \\
 \frac{(O-E_B)}{\sqrt{V_B}} &\leq  \frac{(O-E_A)}{\sqrt{V_B}} 
\end{align*}
Now we multiply $\frac{1}{\sqrt{V_A}}>\frac{1}{\sqrt{V_B}}$ by $(O-E_A)$ and  by $(O-E_B)$ we have
\begin{align*}
 \frac{(O-E_A)}{\sqrt{V_A}} &< \frac{(O-E_A)}{\sqrt{V_B}}  \\
\frac{(O-E_B)}{\sqrt{V_A}} &< \frac{(O-E_B)}{\sqrt{V_B}}
\end{align*}
thus we  obtain the following inequalities
\begin{align*}
&\frac{(O-E_B)}{\sqrt{V_A}} \leq \frac{(O-E_A)}{\sqrt{V_A}} < \frac{(O-E_A)}{\sqrt{V_B}} \\
&\frac{(O-E_B)}{\sqrt{V_A}} < \frac{(O-E_B)}{\sqrt{V_B}} \leq  \frac{(O-E_A)}{\sqrt{V_B}}  
\end{align*}
Then the proof follows the same argument given in the appendix, and indeed $Z_B \leq Z_A$.\\

 \end{itemize}
 
 \end{itemize}

 \end{itemize}

\noindent {\bf Scenario 3 (S3): when $x_{j_i}$ is a censoring time and  $ y_j$ is a failure time }\\
This third scenario is illustrated in Figure \ref{plot2}.

\begin{figure}[ht]
\begin{center}
\resizebox{!}{6cm}{%
\begin{tikzpicture}
\draw[step=1cm,very thin] (5,0) grid (7,2);
\node at (6,2.3) {$\tau_{k_j}$};
\node at (9.2,1.5) {$Y_0(\tau_{k_j})=Y_0(\tau_{k_j-1})-u_0$};
\node at (9.5,.5) {$Y_1(\tau_{k_j})=Y_1(\tau_{k_j-1})-u_1-1$};
\node at (9.3,-.5) {$Y(\tau_{k_j})=Y(\tau_{k_j-1})-u-1$};


\draw[step=1cm,very thin] (0,0) grid (2,2);
\node at (1,2.3) {$\tau_{k_j-1}=y_j$};
\node at (2.9,1.5) {$Y_0(\tau_{k_j-1})$};
\node at (2.9, .5) {$Y_1(\tau_{k_j-1})$};
\node at (2.9,-.5) {$Y(\tau_{k_j-1})$};

\draw[step=1cm,very thin] (0,4) grid (2,6);
\node at (1,6.3) {$\tau_{k_j-1}$};
\node at (2.9,5.5) {$Y_0(\tau_{k_j-1})$};
\node at (2.9, 4.5) {$Y_1(\tau_{k_j-1})$};
\node at (2.9,3.5) {$Y(\tau_{k_j-1})$};

\draw[step=1cm,very thin] (5,4) grid (7,6);
\node at (6,6.3) {$\tau_{k_j}=y_j$};
\node at (10.2,5.5) {$Y_0(\tau_{k_j})=Y_0(\tau_{k_j-1})-u_0-\delta_0(\tau_{k_j-1})$};
\node at (10.2,4.5) {$Y_1(\tau_{k_j})=Y_1(\tau_{k_j-1})-u_1-\delta_1(\tau_{k_j-1})$};
\node at (9.3,3.5) {$Y(\tau_{k_j})=Y(\tau_{k_j-1})-u-1$};

\node at (-2,1) {After};
\node at (-.5,1.5) {$G_0$};
\node at (-.5, .5) {$G_1$};

\node at (-2,5) {Before};
\node at (-.5,5.5) {$G_0$};
\node at (-.5, 4.5) {$G_1$};

\end{tikzpicture}
}
\caption{Scenario S3}
\label{plot2}
\end{center}
\end{figure}
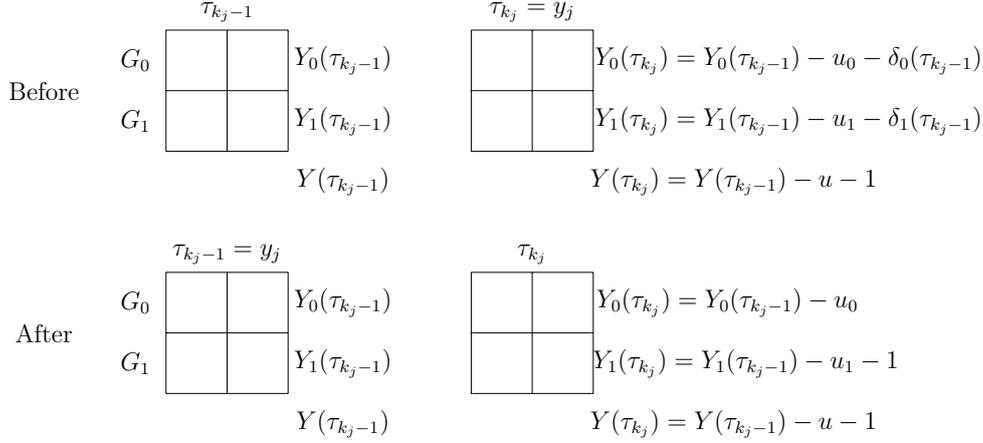

Similarly we  calculate the expected value and the variance before and after the swap as follows:
\begin{align*}
E_B&= \Sigma_{j\neq k_j} E_j+\left[\hat{S}(\tau_j)\right]^{\rho} \frac{Y_1(\tau_{k_j-1})-u_1-\delta_1(\tau_{k_j-1})}{Y(\tau_{k_j-1})-u-1} \\
E_A&= \Sigma_{j\neq k_j} E_j+\left[\hat{S}(\tau_j)\right]^{\rho} \frac{Y_1(\tau_{k_j-1})-u_1-1}{Y(\tau_{k_j-1})-u-1}  
\end{align*}
Clearly $E_B \geq E_A$, thus $(O-E_B) \leq (O-E_A)$.  And the variances are
\begin{align*}
V_B&= \Sigma_{j\neq k_j} V_j+\left[\hat{S}(\tau_j)\right]^{2\rho} \frac{(Y_0(\tau_{k_j-1})-u_0-\delta_0(\tau_{k_j-1}))(Y_1(\tau_{k_j-1})-u_1-\delta_1(\tau_{k_j-1}))}{(Y(\tau_{k_j-1})-u-1)^2} \\
V_A&= \Sigma_{j\neq k_j} V_j+\left[\hat{S}(\tau_j)\right]^{2\rho} \frac{(Y_0(\tau_{k_j-1})-u_0)(Y_1(\tau_{k_j-1})-u_1-1)}{(Y(\tau_{k_j-1})-u-1)^2}  
\end{align*}

 \begin{itemize}
\item[(i)] If $\delta_0(\tau_{k_j-1})=0$ then $\delta_1(\tau_{k_j-1})=1$ and $V_B=V_A$. Thus $Z_B \leq Z_A$, as from above $(O-E_B) \leq (O-E_A)$.\\

\item[(ii)] If $\delta_0(\tau_{k_j-1})=1$ then $\delta_1(\tau_{k_j-1})=0$, 
\[V_B= \Sigma_{j\neq k_j} V_j+ \left[\hat{S}(\tau_j)\right]^{2\rho}\frac{(Y_0(\tau_{k_j-1})-u_0-1)(Y_1(\tau_{k_j-1})-u_1)}{(Y(\tau_{k_j-1})-u-1)^2}\] 

 then we have two sub-cases:
 \begin{itemize}
\item[(iia)] if $Y_1(\tau_{k_j-1})-u_1>Y_0(\tau_{k_j-1})-u_0$ then $V_B<V_A$,
\item[(iib)]  if $Y_1(\tau_{k_j-1})-u_1<Y_0(\tau_{k_j-1})-u_0$ then $V_B>V_A$.\\
  \end{itemize}
    \end{itemize}
The proof for both cases (iia) and (iib) are similar to scenario S2. \\

 

\noindent {\bf Scenario 4 (S4):  when both $x_{j_i}$ and $ y_j$ are failure times}\\
This final scenario is illustrated in Figure \ref{plot3}.

\begin{figure}[ht]
\begin{center}
\resizebox{!}{6cm}{%
\begin{tikzpicture}
\draw[step=1cm,very thin] (5,0) grid (7,2);
\node at (6,2.3) {$\tau_{k_j}=x_{j_i}$};
\node at (8.8,1.5) {$Y_0(\tau_{k_j})=Y_0(\tau_{k_j-1})$};
\node at (9.1,.5) {$Y_1(\tau_{k_j})=Y_1(\tau_{k_j-1})-1$};
\node at (9,-.5) {$Y(\tau_{k_j})=Y(\tau_{k_j-1})-1$};


\draw[step=1cm,very thin] (0,0) grid (2,2);
\node at (1,2.3) {$\tau_{k_j-1}=y_{j}$};
\node at (2.9,1.5) {$Y_0(\tau_{k_j-1})$};
\node at (2.9, .5) {$Y_1(\tau_{k_j-1})$};
\node at (2.9,-.5) {$Y(\tau_{k_j-1})$};

\draw[step=1cm,very thin] (0,4) grid (2,6);
\node at (1,6.3) {$\tau_{k_j-1}=x_{j_i}$};
\node at (2.9,5.5) {$Y_0(\tau_{k_j-1})$};
\node at (2.9, 4.5) {$Y_1(\tau_{k_j-1})$};
\node at (2.9,3.5) {$Y(\tau_{k_j-1})$};

\draw[step=1cm,very thin] (5,4) grid (7,6);
\node at (6,6.3) {$\tau_{k_j}=y_{j}$};
\node at (9.1,5.5) {$Y_0(\tau_{k_j})=Y_0(\tau_{k_j-1})-1$};
\node at (8.8,4.5) {$Y_1(\tau_{k_j})=Y_1(\tau_{k_j-1})$};
\node at (9,3.5) {$Y(\tau_{k_j})=Y(\tau_{k_j-1})-1$};

\node at (-2,1) {After};
\node at (-.5,1.5) {$G_0$};
\node at (-.5, .5) {$G_1$};

\node at (-2,5) {Before};
\node at (-.5,5.5) {$G_0$};
\node at (-.5, 4.5) {$G_1$};

\end{tikzpicture}
}
\caption{Scenario S4}
\label{plot3}
\end{center}
\end{figure}
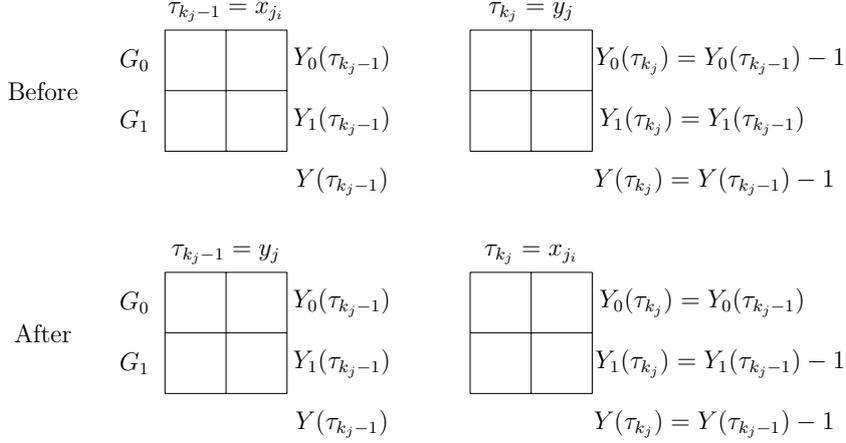

We  calculate the expected value  before and after the swap as follows:
\[E_B= \Sigma_{j\neq k_j} E_j+\left[\hat{S}(\tau_j)\right]^{\rho} \frac{Y_1(\tau_{k_j-1})}{Y(\tau_{k_j-1})-1}\] 

\[E_A= \Sigma_{j\neq k_j} E_j+\left[\hat{S}(\tau_j)\right]^{\rho} \frac{Y_1(\tau_{k_j-1})-1}{Y(\tau_{k_j-1})-1} \] 

Clearly $E_B \geq E_A$, thus $(O-E_B) \leq (O-E_A)$.  And the variances are

\[V_B= \Sigma_{j\neq k_j} V_j+\left[\hat{S}(\tau_j)\right]^{2\rho} \frac{(Y_0(\tau_{k_j-1})-1)Y_1(\tau_{k_j-1})}{(Y(\tau_{k_j-1})-1)^2}\] 

\[V_A= \Sigma_{j\neq k_j} V_j+ \left[\hat{S}(\tau_j)\right]^{2\rho}\frac{Y_0(\tau_{k_j-1})(Y_1(\tau_{k_j-1})-1)}{(Y(\tau_{k_j-1})-1)^2} \] 

  \begin{itemize}
\item[(i)] If $Y_0(\tau_{k_j-1})=Y_1(\tau_{k_j-1})$ then $V_B=V_A$. Thus $Z_B \leq Z_A$, as from above $(O-E_B) \leq (O-E_A)$.\\

\item[(ii)]  As we know that both $x_{j_i}$ and $ y_j$ are failure times, we have the two-sub cases:

  \begin{itemize}
\item[(iia)] If $Y_1(\tau_{k_j-1})>Y_0(\tau_{k_j-1})$ then $V_B<V_A$, thus $\frac{1}{V_B}>\frac{1}{V_A}$

  \begin{itemize}
\item[-] If $(O-E_B)>0$, then $(O-E_A)$ also has to be positive, we obtain similarly
 that
\begin{align*}
\frac{(O-E_B)}{\sqrt{V_A}} &\leq \frac{(O-E_A)}{\sqrt{V_A}} < \frac{(O-E_A)}{\sqrt{V_B}}  \\
\frac{(O-E_B)}{\sqrt{V_A}} &< \frac{(O-E_B)}{\sqrt{V_B}} \leq  \frac{(O-E_A)}{\sqrt{V_B}} 
\end{align*}
Then the proof follows the same argument given in the appendix, and indeed $Z_B \leq Z_A$.\\
 
\item[-] If   $(O-E_B)<0$, then
\begin{align*}
\frac{1}{\sqrt{V_B}}&>\frac{1}{\sqrt{V_A}}\\
\frac{(O-E_B)}{\sqrt{V_B}}&<\frac{(O-E_B)}{\sqrt{V_A}}\leq\frac{(O-E_A)}{\sqrt{V_A}}
\end{align*}
then $Z_B \leq Z_A$. 

 \end{itemize}

 \end{itemize}
\item[(iib)]  if $Y_1(\tau_{k_j-1})<Y_0(\tau_{k_j-1})$ then $V_B>V_A$, thus $\frac{1}{V_B}<\frac{1}{V_A}$.\\

\begin{itemize}
\item[-] If $(O-E_B)>0$ then $(O-E_A)$ also has to be positive.\\
 In this case, we can multiply $(O-E_B) \leq (O-E_A)$ by $\frac{1}{\sqrt{V_B}}<\frac{1}{\sqrt{V_A}}$, to obtain that $Z_B \leq Z_A$.\\
 
 \item[-] If $(O-E_A)>0$, then 
 $\frac{(O-E_A)}{\sqrt{V_B}}<\frac{(O-E_A)}{\sqrt{V_A}}$,\\
 and if we divide  $(O-E_B) \leq (O-E_A)$ by $\sqrt{V_B}$ then we have
 $\frac{(O-E_B)}{\sqrt{V_B}} \leq \frac{(O-E_A)}{\sqrt{V_B}} $ thus we obtain that $Z_B \leq Z_A$.
Note this include the case when $(O-E_B)$ is negative but $(O-E_A)$ is  positive.\\
 
\item[-] If both  $(O-E_B)$ and $(O-E_A)$ are negative, we obtain similarly
 that
\begin{align*}
&\frac{(O-E_B)}{\sqrt{V_A}} \leq \frac{(O-E_A)}{\sqrt{V_A}} < \frac{(O-E_A)}{\sqrt{V_B}} \\
&\frac{(O-E_B)}{\sqrt{V_A}} < \frac{(O-E_B)}{\sqrt{V_B}} \leq  \frac{(O-E_A)}{\sqrt{V_B}} 
\end{align*}
Then the proof follows the same argument given in the appendix, and indeed $Z_B \leq Z_A$.\\

\end{itemize}

\end{itemize}


\section{An example}\label{sec.example}
As illustrative example of the theorem proven in this paper, suppose that there are two groups with 5 observations each, where all observations from $G_0$ are smaller than all observations from group $G_1$. Let the censoring status for $G_0$  be (1, 0, 1, 1, 0) and for $G_1$ be  (1, 1, 0, 1, 0). For example we can choose group $G_0$ observations as 1, $2^+$,  3, 4, $5^+$  and group $G_1$ observations as 6,  7,  $8^+$, 9, $10^+$.  Figure \ref{fig.ex3}  shows the $z$-test values, obtained using Equation \eqref{eq.z}, for all 26 switches for  different values of $\rho$, $\rho=\{0,0.1,0.2,\ldots,1\}$. The $z$-values for all 26 switches and  for $\rho=0.5$ are given in Table \ref{tab.ex1}. The first thing one can observe here is that the $z$-test values are in   ascending order regardless of the values of $\rho$. For Table \ref{tab.ex1} and Figure \ref{fig.ex3} , we can also see that when switching between censored observations the $z$-values do not change, see e.g.\ switches (11,12), (14,15), (21,22) and (24,25).

\begin{figure}[ht]
\begin{center}
\includegraphics[width=0.98\textwidth] {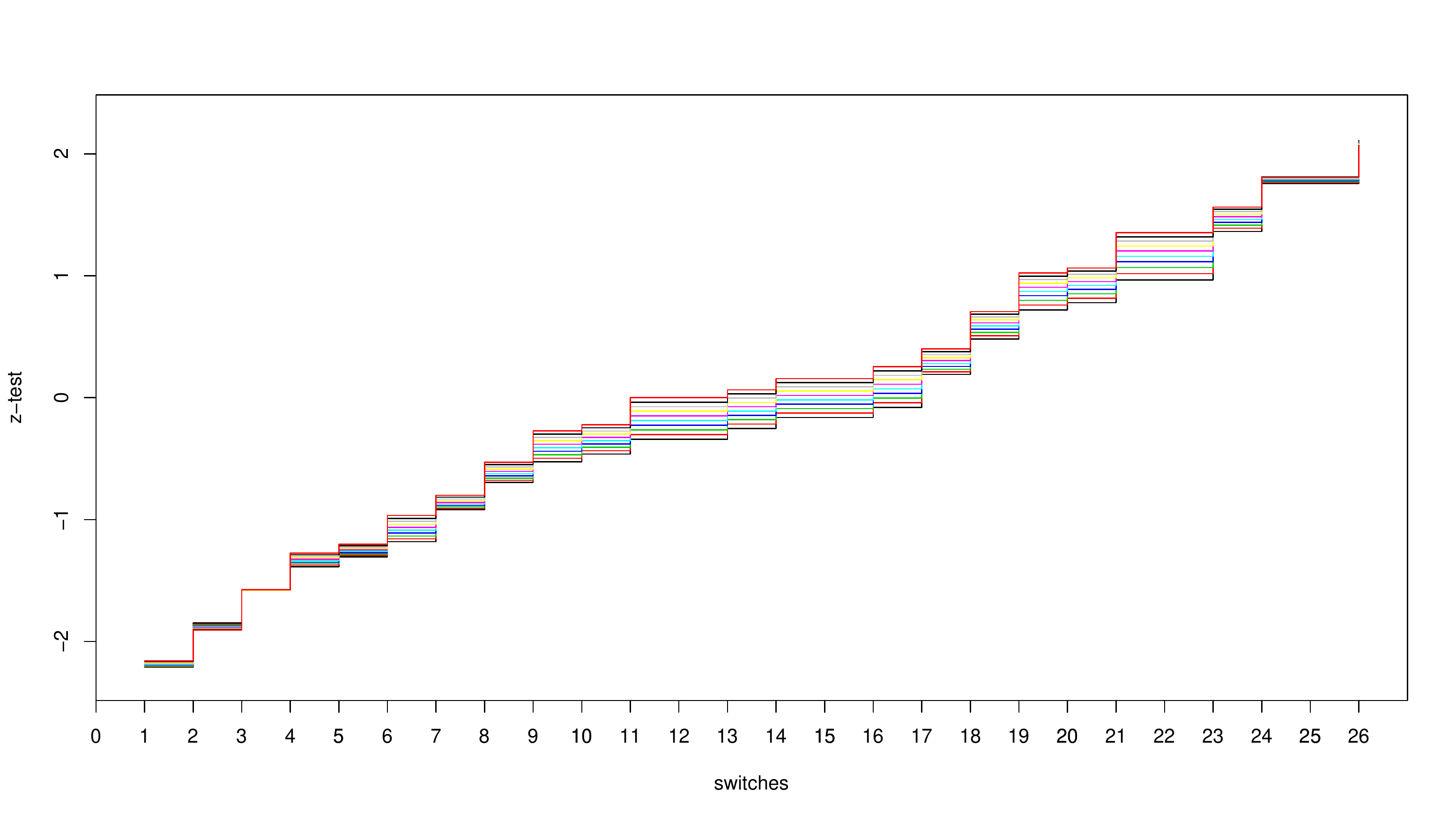}
\caption{$z$-test values for different values of $\rho$}
\label{fig.ex3}
\end{center}
\end{figure}

\begin{table}[ht]
\centering\footnotesize
\begin{tabular}{c|ccccc|ccccc|c}
  \hline
& \multicolumn{5}{c|}{$G_0$}&\multicolumn{5}{c|}{$G_1$}& $z$-test \\ 
  \hline
1 & 1 & $2^+$ & 3 & 4 & $5^+$ & 6 & 7 & $8^+$ & 9 & $10^+$ & -2.1901 \\ 
  2 & 1 & $2^+$ & 3 & 4 & 6 & $5^+$ & 7 & $8^+$ & 9 & $10^+$ & -1.8797 \\ 
  3 & 1 & $2^+$& 3 & 6 & 4 & $5^+$ & 7 & $8^+$ & 9 & $10^+$ & -1.5791 \\ 
  4 & 1 & $2^+$ & 6 & 3 & 4 & $5^+$ & 7 & $8^+$ & 9 & $10^+$ & -1.3374 \\ 
  5 & 1 & 6 & $2^+$ & 3 & 4 & $5^+$ & 7 & $8^+$ & 9 & $10^+$ & -1.2602 \\ 
  6 & 6 & 1 & $2^+$ & 3 & 4 & $5^+$ & 7 & $8^+$ & 9 & $10^+$ & -1.0872 \\ 
  7 & 6 & 1 & $2^+$ & 3 & 4 & 7 & $5^+$ & $8^+$ & 9 & $10^+$ & -0.8729 \\ 
  8 & 6 & 1 & $2^+$ & 3 & 7 & 4 & $5^+$ & $8^+$ & 9 & $10^+$ & -0.6236 \\ 
  9 & 6 & 1 & $2^+$ & 7 & 3 & 4 & $5^+$ & $8^+$ & 9 & $10^+$ & -0.4107 \\ 
  10 & 6 & 1 & 7 & $2^+$ & 3 & 4 & $5^+$ & $8^+$ & 9 & $10^+$ & -0.3533 \\ 
  11 & 6 & 7 & 1 & $2^+$ & 3 & 4 & $5^+$ & $8^+$ & 9 & $10^+$ & -0.1873 \\ 
  12 & 6 & 7 & 1 & $2^+$ & 3 & 4 & $8^+$ & $5^+$ & 9 & $10^+$ & -0.1873 \\ 
  13 & 6 & 7 & 1 & $2^+$ & 3 & $8^+$ & 4 & $5^+$ & 9 & $10^+$ & -0.1096 \\ 
  14 & 6 & 7 & 1 & $2^+$ & $8^+$ & 3 & 4 & $5^+$& 9 & $10^+$ & -0.0175 \\ 
  15 & 6 & 7 & 1 & $8^+$ & $2^+$ & 3 & 4 & $5^+$ & 9 & $10^+$ & -0.0175 \\ 
  16 & 6 & 7 & $8^+$ & 1 & $2^+$ & 3 & 4 & $5^+$ & 9 & $10^+$ & 0.0722 \\ 
  17 & 6 & 7 & $8^+$ & 1 & $2^+$ & 3 & 4 & 9 & $5^+$ & $10^+$ & 0.2798 \\ 
  18 & 6 & 7 & $8^+$ & 1 & $2^+$ & 3 & 9 & 4 & $5^+$ & $10^+$ & 0.5884 \\ 
  19 & 6 & 7 & $8^+$ & 1 & $2^+$ & 9 & 3 & 4 & $5^+$ & $10^+$ & 0.8718 \\ 
  20 & 6 & 7 & $8^+$ & 1 & 9 & $2^+$ & 3 & 4 & $5^+$ & $10^+$ & 0.9208 \\ 
  21 & 6 & 7 & $8^+$ & 9 & 1 & $2^+$ & 3 & 4 & $5^+$ & $10^+$ & 1.1602 \\ 
  22 & 6 & 7 & $8^+$ & 9 & 1 & $2^+$ & 3 & 4 & $10^+$ & $5^+$& 1.1602 \\ 
  23 & 6 & 7 & $8^+$ & 9 & 1 & $2^+$ & 3 & $10^+$ & 4 & $5^+$ & 1.4627 \\ 
  24 & 6 & 7 & $8^+$ & 9 & 1 & $2^+$ & $10^+$ & 3 & 4 & $5^+$ & 1.7874 \\ 
  25 & 6 & 7 & $8^+$ & 9 & 1 & $10^+$ & $2^+$ & 3 & 4 & $5^+$ & 1.7874 \\ 
  26 & 6 & 7 & $8^+$ & 9 & $10^+$ & 1 & $2^+$& 3 & 4 & $5^+$ & 2.0898 \\ 
   \hline
\end{tabular}
   \caption{$z$-test values and $\rho=0.5$}
\label{tab.ex1}
\end{table}


\section{Concluding remarks} \label{sec.con}
\label{concl}

This paper studies the monotonicity of the $G^{\rho}$ class of weighted logrank tests  introduced by \cite{HF82}. 
We proved a convenient monotonicity property for the two-sample class of logrank tests. This property holds trivially
for the special case where there are no right censored observations (the Wilcoxon test), but, while intuitively quite
clear, its proof required care due to the right censoring affecting the data. One can utilise this property to derive
optimal bounds for the test statistic in case of imprecise data; such uses will be investigated for particular applications
and reported elsewhere. It is also worth investigation the construction of statistical tests for equality of survival functions
based on the number of switches, in a way that is similar to tests for perfect ranking in ranked set sampling presented by \citet{LI20081325}.  It is also of interest to study the generalization of the monotonicity property for such tests with
more than two groups of data, this  is left as a topic for future research.

\section*{Appendix}
\begin{lemma}
Let $x$ and $y$ be any two real numbers in an interval  $[a,b]$, where $a<b$. Then  if $x-a<y-a$ and $b-y<b-x$ then $x<y$.
\end{lemma}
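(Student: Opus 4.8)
The plan is to derive the conclusion directly from either hypothesis, since each of the two assumed inequalities is equivalent to the desired inequality $x<y$ after adding or subtracting a constant. The hypothesis $x,y\in[a,b]$ with $a<b$ is what makes the quantities $x-a,\,y-a,\,b-x,\,b-y$ naturally nonnegative, so that they can be read as the distances of $x$ and $y$ from the two endpoints; but this membership is not actually used in the deduction and only serves to fix the interpretation.

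First I would start from $x-a<y-a$. Adding the constant $a$ to both sides preserves the strict inequality and gives $x<y$, which is exactly the conclusion. For completeness I would also record that the second hypothesis yields the same thing independently: from $b-y<b-x$, subtracting $b$ from both sides gives $-y<-x$, and multiplying by $-1$ (which reverses the inequality) gives $x<y$. Hence the two hypotheses are in fact logically equivalent to one another, and each, on its own, already implies the conclusion.

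I do not expect any genuine obstacle here: the whole statement reduces to the order-preserving behaviour of adding a fixed real number to both sides of a strict inequality (together with the order-reversing behaviour of negation). The only point worth flagging is the redundancy of the two hypotheses; I would nonetheless keep both derivations in the write-up, because the symmetric roles of the lower endpoint $a$ and the upper endpoint $b$ mirror the two sandwiching chains used in the main proof, where the common lower bound $\frac{O-E_B}{\sqrt{V_A}}$ plays the role of $a$ and the common upper bound $\frac{O-E_A}{\sqrt{V_B}}$ plays the role of $b$.
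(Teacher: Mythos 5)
Your proof is correct and is essentially the same argument as the paper's: the paper decomposes $y-a=(x-a)+(y-x)$ and $b-x=(b-y)+(y-x)$ and concludes that $y-x>0$, which is just your add-a-constant manipulation written as a decomposition. Your further observation that each hypothesis alone is equivalent to the conclusion $x<y$ (so the two hypotheses are redundant) is accurate and applies equally to the paper's formulation, which presents both as if jointly needed.
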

\begin{proof}
The setting is illustrated in the figure below.
\begin{center}
\begin{tikzpicture}

\draw(0,0)--(10,0);

\foreach \x/\xtext in {0/$a$,2.5/$x$,7.5/$y$,10/$b$}
    \draw(\x,5pt)--(\x,-5pt) node[below] {\xtext};

\draw[decorate, decoration={brace, amplitude=5pt,raise=1pt}, yshift=2ex,blue]  (0,0) -- node[above=0.4ex] {$x-a$}  (2.5,0);
\draw[decorate, decoration={brace, amplitude=5pt,mirror,raise=1pt}, yshift=2ex,blue]  (10,0) -- node[above=0.4ex] {$b-x$}  (2.5,0);

\draw[decorate, decoration={brace, amplitude=5pt,raise=1pt}, yshift=2ex, red]  (0,0.8) -- node[above=0.4ex] {$y-a$}  (7.5,0.8);
\draw[decorate, decoration={brace, amplitude=5pt,mirror,raise=1pt}, yshift=2ex,red]  (10,0.8) -- node[above=0.4ex] {$b-y$}  (7.5,0.8);

\end{tikzpicture}

\end{center}
As $y-a=(x-a)+(y-x)$ and $b-x=(b-y)+(y-x)$, then
in order for both inequalities to hold, the second term in the right hand side must be positive, i.e.\ $y-x>0$ thus $x<y$. 
\end{proof}

We use the lemma above  to prove that  $\frac{O-E_A}{\sqrt{V_A}}>\frac{O-E_B}{\sqrt{V_B}}$. First we define the 4 differences $D_1$, $D_2$, $D_3$ and $D_4$, which is illustrated in  the figure below, as
\begin{center}
\begin{tikzpicture}

\draw(0,0)--(10,0);

\foreach \x/\xtext in {0/$\frac{O-E_B}{\sqrt{V_A}}$,2.5/$\frac{O-E_B}{\sqrt{V_B}}$,7.5/$\frac{O-E_A}{\sqrt{V_A}}$,10/$\frac{O-E_A}{\sqrt{V_B}}$}
    \draw(\x,5pt)--(\x,-5pt) node[below] {\xtext};

\draw[decorate, decoration={brace, amplitude=5pt,raise=1pt}, yshift=2ex,blue]  (0,0) -- node[above=0.4ex] {$D_1$}  (2.5,0);
\draw[decorate, decoration={brace, amplitude=5pt,mirror,raise=1pt}, yshift=2ex,blue]  (10,0) -- node[above=0.4ex] {$D_2$}  (2.5,0);

\draw[decorate, decoration={brace, amplitude=5pt,raise=1pt}, yshift=2ex, red]  (0,0.8) -- node[above=0.4ex] {$D_3$}  (7.5,0.8);
\draw[decorate, decoration={brace, amplitude=5pt,mirror,raise=1pt}, yshift=2ex,red]  (10,0.8) -- node[above=0.4ex] {$D_4$}  (7.5,0.8);

\end{tikzpicture}

\end{center}
\begin{align*}
D_1&=\frac{O-E_B}{\sqrt{V_B}}-\frac{O-E_B}{\sqrt{V_A}}\\
D_2&=\frac{O-E_A}{\sqrt{V_B}}-\frac{O-E_B}{\sqrt{V_B}}\\
D_3&=\frac{O-E_A}{\sqrt{V_A}}-\frac{O-E_B}{\sqrt{V_A}}\\
D_4&=\frac{O-E_A}{\sqrt{V_B}}-\frac{O-E_A}{\sqrt{V_A}}
\end{align*}

In order for the inequality $\frac{O-E_A}{\sqrt{V_A}}>\frac{O-E_B}{\sqrt{V_B}}$ to hold,  both inequalities $D_3 > D_1$ and $D_2>D_4$ must be hold. We can express  $D_2$  and $D_3$  as
\begin{align*}
D_2&= D_4 + \left[\frac{O-E_A}{\sqrt{V_A}}-\frac{O-E_B}{\sqrt{V_B}}\right]\\
D_3&=D_1 +\left[\frac{O-E_A}{\sqrt{V_A}}-\frac{O-E_B}{\sqrt{V_B}}\right]
\end{align*}
so $\frac{O-E_A}{\sqrt{V_A}}-\frac{O-E_B}{\sqrt{V_B}}$ has to be positive, therefore $\frac{O-E_A}{\sqrt{V_A}}>\frac{O-E_B}{\sqrt{V_B}}$. 






\begin{thebibliography}{11}
\expandafter\ifx\csname natexlab\endcsname\relax\def\natexlab#1{#1}\fi
\expandafter\ifx\csname url\endcsname\relax
  \def\url#1{\texttt{#1}}\fi
\expandafter\ifx\csname urlprefix\endcsname\relax\def\urlprefix{URL }\fi

\bibitem[{Gehan(1965)}]{Gehan65}
Gehan, E.~A., 1965. A generalized wilcoxon test for comparing arbitrarily
  singly-censored samples. Biometrika 52~(1/2), 203--223.

\bibitem[{Harrington and Fleming(1982)}]{HF82}
Harrington, D.~P., Fleming, T.~R., 1982. A class of rank test procedures for
  censored survival data. Biometrika 69~(3), 553--566.

\bibitem[{Kaplan and Meier(1958)}]{KM58}
Kaplan, E., Meier, P., 1958. Nonparametric estimation from incomplete
  observations. Journal of the American Statistical Association 53, 457--481.

\bibitem[{Latta(1977)}]{LB77}
Latta, R.~B., 1977. Generalized wilcoxon statistics for the two-sample problem
  with censored data. Biometrika 64~(3), 633--635.

\bibitem[{Li and Balakrishnan(2008)}]{LI20081325}
Li, T., Balakrishnan, N., 2008. Some simple nonparametric methods to test for
  perfect ranking in ranked set sampling. Journal of Statistical Planning and
  Inference 138~(5), 1325 -- 1338.

\bibitem[{Lou and Lan(1998)}]{LL98}
Lou, W.~W., Lan, K.~G., 1998. A note on the gehan-wilcoxon statistic.
  Communications in Statistics - Theory and Methods 27~(6), 1453--1459.

\bibitem[{Mantel(1966)}]{MN66}
Mantel, N., 1966. Evaluation of survival data and two new rank order statistics
  arising in its consideration. Cancer chemotherapy reports 50~(3), 163--170.

\bibitem[{Oller and Langohr(2017)}]{RFHtest}
Oller, R., Langohr, K., 2017. {FHtest}: An {R} package for the comparison of
  survival curves with censored data. Journal of Statistical Software 81~(15),
  1--25.

\bibitem[{Peto and Peto(1972)}]{Peto72}
Peto, R., Peto, J., 1972. Asymptotically efficient rank invariant test
  procedures. Journal of the Royal Statistical Society. Series A (General)
  135~(2), 185--207.

\bibitem[{Prentice(1978)}]{Prentice78}
Prentice, R.~L.,  1978. {Linear rank tests with right censored data}.
  Biometrika 65~(1), 167--179.

\bibitem[{Therneau(2015)}]{Rsurvival}
Therneau, T.~M., 2015. A Package for Survival Analysis in S. Version 2.38.

\end{thebibliography}

\end{document}